\newcommand{\IN}{\mathcal{IN}}
\newcommand{\Z}{\mathbb{Z}}
\newcommand{\N}{\mathbb{N}}
\newcommand{\R}{\mathbb{R}}
\newcommand{\K}{\mathbb{K}}
\newcounter{classcount}
\newcounter{listcount}
\renewcommand*\thelistcount{\arabic{listcount}}
\newcounter{eqlist}
\newcounter{eqlistI}
\renewcommand*\theeqlistI{E.\arabic{eqlistI}}
\newcounter{eqlistII}
\renewcommand*\theeqlistII{E.\arabic{eqlistII}\ensuremath{^\prime}}
\theoremstyle{plain}
\newtheorem{theorem}{Theorem}
\newtheorem{proposition}[theorem]{Proposition}
\newtheorem{conjecture}[theorem]{Conjecture}
\theoremstyle{definition}
\theoremstyle{remark}
\newtheorem{remark}{Remark}
\renewcommand{\imath}{\mathrm{i}}
\renewcommand{\epsilon}{\varepsilon}
\renewcommand{\phi}{\varphi}
\renewcommand{\pdv}[2]{\begingroup 
\@tempswafalse\toks@={}\count@=\z@ 
\@for\next:=#2\do 
{\expandafter\check@var\next\@nil
 \advance\count@\der@exp 
 \if@tempswa 
   \toks@=\expandafter{\the\toks@\,}%
 \else 
   \@tempswatrue 
 \fi 
 \toks@=\expandafter{\the\expandafter\toks@\expandafter\partial\der@var}}%
\frac{\partial\ifnum\count@=\@ne\else^{\number\count@}\fi#1}{\the\toks@}%
\endgroup} 
\def\check@var{\@ifstar{\mult@var}{\one@var}} 
\def\mult@var#1#2\@nil{\def\der@var{#2^{#1}}\def\der@exp{#1}} 
\def\one@var#1\@nil{\def\der@var{#1}\chardef\der@exp\@ne} 
\begin{document}
\title[Algebraic entropy of a class of differential-difference equations]{Algebraic entropy of a class of five-point differential-difference equations}

\author{Giorgio Gubbiotti}
\address{School  of  Mathematics  and  Statistics  F07,  The  University  of  Sydney,  NSW  2006, Australia}
\email{giorgio.gubbiotti@sydney.edu.au}
\subjclass[2010]{70S05; 39A99}

\date{\today}

\begin{abstract}
    We compute the algebraic entropy of a class of integrable
    Volterra-like five-point differential-difference equations 
    recently classified using the generalised symmetry method.
    We show that, when applicable, the results of the algebraic
    entropy agrees with the result of the generalised symmetry method,
    as all the equations in this class have vanishing entropy.
\end{abstract}

\maketitle

\section{Introduction}

One of the most important topic in modern Mathematical Physics is the
study of the so-called \emph{integrable systems}.
Roughly speaking integrable systems are important both from theoretical
and practical point of view since they can be regarded as \emph{universal models}
for physics going beyond the linear regime \cite{Calogero1991}.
The birth modern theory of integrable systems is usually recognized in the
seminal works of Zabusky and Kruskal \cite{Zabusky1965}, Gardner, Greene, Kruskal 
and Miura \cite{Gardner1967} and Lax \cite{Lax1968} on the Korteweg-deVries 
(KdV) equation \cite{Korteweg1895}.

The concept of integrability come from Classical Mechanics and
means the existence of a sufficiently high number of \emph{first integrals}.
To be more specific an Hamiltonian with Hamiltonian $H=H\left( p,q \right)$
system with $N$ degrees of freedom is said to be integrable if there exist
$N-1$ well defined\footnote{We say that a function is well-defined on the
phase space if it is \emph{analytic} and \emph{single-valued}.} 
functionally independent and Poisson-commuting first integrals 
\cite{Liouville1855,Whittaker}.
In the case of systems with infinitely many degrees of freedom, 
e.g. partial differential equations like the KdV equation, the 
existence of \emph{infinitely many conservation laws} is then required.
One of the most efficient way to find these infinitely many is the
existence a so-called \emph{Lax pair} \cite{Lax1968}.
A Lax pair is an associated \emph{overdetermined} linear problem
whose compatibility condition is guaranteed if and only if the desired
non-linear equation is satisfied.

Nowadays a purely algorithmic method to prove or disprove the existence
of a Lax pair is not available, so during the years many \emph{integrability
detectors} have been developed.
Integrability detectors are algorithmic procedure which are sufficient conditions
for integrability, or alternative definitions of integrability.
This means that integrability detectors can be used to prove the 
integrability of a given equation without the need of a Lax pair.
One of the fundamental integrability detectors, which works both at
continuous and discrete level, is the \emph{generalised symmetry approach}.
The generalised symmetry approach was mainly developed by the scientific 
school of A. B. Shabat in Ufa during the 80s and has obtained many important
result in the classification of partial differential equations 
\cite{AdlerShabatYamilov2000,HabibullinSokolovYamilov1995,MikhailovShabat1993,
MikhailovShabatSokolov1991,MikhailovShabatYamilov1987,Sokolov1988,SokolovShabat1984},
differential-difference equations 
\cite{Yamilov1983,Yamilov1984thesis,LeviYamilov1997,GarifullinYamilovLevi2016,GarifullinYamilovLevi2018}
and partial difference equations
\cite{LeviYamilov2009,LeviYamilov2011,GarifullinYamilov2015,GGY_autom}.
Another integrability detector is the algebraic entropy test.
The algebraic entropy test is specific to systems with discrete
degrees of freedom which can be put into a bi-rational form.
The basic idea, given a bi-rational map, which can be an ordinary difference
equation, a differential-difference equation or even a partial difference equation,
is to examine the growth of the degree of its iterates, and extract
a canonical quantity, which is an index of complexity of the map. 
This canonical quantity is what is called the algebraic entropy.
The idea of algebraic entropy as measure of the complexity of
the growth of bi-rational maps comes from the notion of complexity 
introduced by Arnol'd in \cite{Arnold1990} and was discussed for
the first time in relation of discrete systems by Veselov \cite{Veselov1992}.

In this paper we will compute the algebraic entropy of
two classes of first order, five-point differential-difference equation
which were classified recently in 
\cite{GarifullinYamilovLevi2016,GarifullinYamilovLevi2018}
through the generalised symmetry method.
We mention that these equations were used to produce and classify new 
examples of quad-equations in \cite{GGY_autom}.
These equations are autonomous fourth order differential-difference 
equations of the following form:
\begin{equation}
    \begin{aligned}
        \dv{u_{n}}{t} &= A\left( u_{n+1},u_{n},u_{n-1} \right)u_{n+2}
    +B\left( u_{n+1},u_{n},u_{n-1} \right)u_{n-2}
    \\
    &+C\left( u_{n+1},u_{n},u_{n-1} \right),
    \end{aligned}
    \label{eq:4thgen}
\end{equation}
where $u_{n}=u_{n}\left( t \right)$, 
is the dependent variable depending on $n\in\Z$ and on $t\in\R$.
Due to the similarity of equation \eqref{eq:4thgen} with the well-known
three-point Volterra equation
\begin{equation}
    \dv{u_{n}}{t} = u_{n}\left( u_{n+1}-u_{n-1} \right),
    \label{eq:2thgen}
\end{equation}
we will call equations of this form \emph{Volterra-like equations}.

\begin{remark}
    Throughout this paper we are going to consider only autonomous 
    equations of the form \eqref{eq:4thgen}.
    Therefore, we will make use of the short-hand notation $u_{n+k}=u_{k}$ 
    to simplify the formul\ae.
    \label{rem:notation}
\end{remark}

Explicitly, we are going to consider are the following equations
divided in six lists:
\setcounter{eqlistI}{0}
\setcounter{eqlistII}{0}
\begin{description}[%
  before={\setcounter{listcount}{0}},%
  ,font=\bfseries\stepcounter{listcount}List \thelistcount]
  \item Equations related to the double Volterra equation:
       \begin{align}
            \dv{u_0}{t}&=u_0(u_{2}-u_{-2}),
            \stepcounter{eqlistI}
            \tag{\theeqlistI}
            \label{Vol} 
            \\
            \dv{u_0}{t}&=u_0^2(u_{2}-u_{-2}),
            \stepcounter{eqlistI}
            \tag{\theeqlistI}
            \label{Vol0}
            \\
            \dv{u_0}{t}&=(u_0^2+u_0)(u_{2}-u_{-2}),
            \stepcounter{eqlistI}
            \tag{\theeqlistI}
            \label{Vol1}
            \\
            \dv{u_0}{t}&=(u_{2}+u_{1})(u_0+u_{-1})-(u_1+u_0)(u_{-1}+u_{-2}),
            \stepcounter{eqlistI}
            \tag{\theeqlistI}
            \label{Vol_mod}
            \\
            \dv{u_0}{t}&
            \begin{aligned}[t]
                &=(u_{2}-u_{1}+a)(u_0-u_{-1}+a)
                \\
                &+(u_1-u_0+a)(u_{-1}-u_{-2}+a)+b,
            \end{aligned}
            \stepcounter{eqlistI}
            \tag{\theeqlistI}
            \label{Vol_mod1}
            \\
            \dv{u_0}{t}&
            \begin{aligned}[t]
                &=u_{2}u_{1}u_0(u_0u_{-1}+1)\\
                &-(u_1u_0+1)u_0u_{-1}u_{-2}+u_0^2(u_{-1}-u_{1}),
            \end{aligned}
            \stepcounter{eqlistI}
            \tag{\theeqlistI}
            \label{Vol2}
            \\
           \dv{u_0}{t}&=u_{{0}} \left[u_1(u_2-u_0)+u_{-1}(u_0-u_{-2}) \right],
            \stepcounter{eqlistII}
            \tag{\theeqlistII}
            \label{eq1ii}
            \\
            \dv{u_0}{t}&=u_{{1}}{u_{{0}}}^{2}u_{-1} \left( u_{{2}}-u_{{-2}} \right).
            \stepcounter{eqlistII}
            \tag{\theeqlistII}
            \label{eq2ii}
       \end{align}
        Transformations $\tilde u_k=u_{2k}$ or $\tilde u_k=u_{2k+1}$ turn equations 
        \eqref{Vol}-\eqref{Vol1} into the well-known Volterra equation and its modifications 
    in their standard form.
       The other equations are related to the \emph{double
       Volterra equation} \eqref{Vol} through some autonomous 
        non-invertible non-point transformations. 
        We note that equation \eqref{eq2ii} was presented in 
        \cite{AdlerPostnikov2008}.
   \item Linearizable equations:
       \begin{align}
           \dv{u_0}{t}&
           \begin{aligned}[t]
               &=(T-a)
           \left[\frac{(u_1+au_0+b)(u_{-1}+au_{-2}+b)}{u_{0}+au_{-1}+b}+u_0+au_{-1}+b\right]
           \\
           &+cu_0+d,
           \end{aligned}
            \stepcounter{eqlistI}
            \tag{\theeqlistI}
            \label{Bur2}
            \\
            \dv{u_0}{t}&=\frac{u_2u_0}{u_1}+u_1-a^2\left(u_{-1}+\frac{u_0u_{-2}}{u_{-1}}\right)+cu_0.
            \stepcounter{eqlistI}
            \tag{\theeqlistI}
            \label{Bur}
       \end{align}
       In both equations $ a\neq0,$ in \eqref{Bur2} $(a+1)d=bc$, and $T$ is the translation
       operator $Tf_{n}=f_{n+1}$.

       Both equations of List \thelistcount\ are related to the linear equation:
       \begin{equation}
       \label{lin_eq} 
       \dv{u_0}{t}=u_2-a^2 u_{-2}+\frac{c}{2}u_0 
        \end{equation}
        through an autonomous non-invertible non-point transformations.
        We note that \eqref{Bur2} is linked to \eqref{lin_eq}
        with a transformation which is implicit in both directions,
        see \cite{GarifullinYamilovLevi2016} for more details.
    \item Equations related to a generalised symmetry of the  Volterra equation:
        \begin{align}
            \dv{u_0}{t}&
            \begin{aligned}[t]
                &=u_{{0}} 
                \left[u_1(u_2+u_1+u_0)-u_{-1}(u_0+u_{-1}+u_{-2}) \right]
                \\
                &+cu_{{0}} \left( u_{{1}}-u_{{-1}} \right),
            \end{aligned}
            \stepcounter{eqlistII}
            \tag{\theeqlistII}
            \label{Vol1s}
            \\
            \dv{u_0}{t} &
            \begin{aligned}[t]
                &= (u_{{0}}^2-a^2) 
            \left[(u_1^2-a^2)(u_2+u_0)-(u_{-1}^2-a^2)(u_0+u_{-2}) \right]
            \\
            &+c(u_{{0}}^2-a^2) \left( u_{{1}}-u_{{-1}} \right),
            \end{aligned}
            \stepcounter{eqlistII}
            \tag{\theeqlistII}
            \label{mVol2}
            \\
            \dv{u_0}{t}&=(u_1-u_0+a)(u_0-u_{-1}+a)(u_2-u_{-2}+4a+c)+b,
            \stepcounter{eqlistII}
            \tag{\theeqlistII}
            \label{Volz}
            \\
            \dv{u_0}{t}&=u_0[u_1(u_2-u_1+u_0)-u_{-1}(u_0-u_{-1}+u_{-2})],
            \stepcounter{eqlistII}
            \tag{\theeqlistII}
            \label{Vol_mod1s}
            \\
            \dv{u_0}{t}&=(u_{{0}}^2-a^2) \left[(u_1^2-a^2)(u_2-u_0)+(u_{-1}^2-a^2)(u_0-u_{-2}) \right],
            \stepcounter{eqlistII}
            \tag{\theeqlistII}
            \label{mVol3}
            \\
            \dv{u_0}{t}&=(u_1+u_0)(u_0+u_{-1})(u_2-u_{-2}).
            \stepcounter{eqlistII}
            \tag{\theeqlistII}
            \label{Vol_mod2}
        \end{align}
        These equations are related between themselves by some transformations, 
        for more details see \cite{GarifullinYamilovLevi2018}.
        Moreover equations (\ref{Vol1s},\ref{mVol2},\ref{Volz}) are the generalised symmetries of some known
        three-point autonomous 	differential-difference equations \cite{Yamilov2006}.
    \item  Equations of the relativistic Toda type:
        \begin{align}
            \dv{u_0}{t}&=(u_0-1)\left(\frac{u_2(u_1-1)u_0}{u_1}-\frac{u_0(u_{-1}-1)u_{-2}}{u_{-1}}-u_1+u_{-1}\right),
            \stepcounter{eqlistI}
            \tag{\theeqlistI}
        \label{our1}
        \\
        \dv{u_0}{t}&
        \begin{aligned}[t]
        &=\frac{u_2u_1^2u_0^2(u_0u_{-1}+1)}{u_1u_0+1}-\frac{(u_1u_0+1)u_0^2u_{-1}^2u_{-2}}{u_0u_{-1}+1}
        \\
        &-\frac{(u_1-u_{-1})(2u_1u_0u_{-1}+u_1+u_{-1})u_0^3}{(u_1u_0+1)(u_0u_{-1}+1)},
        \end{aligned}
            \stepcounter{eqlistI}
            \tag{\theeqlistI}
        \label{our2}
        \\
            \dv{u_0}{t}&=(u_1u_0-1)(u_0u_{-1}-1)(u_2-u_{-2}).
            \setcounter{eqlistII}{13}
            \tag{\theeqlistII} 
            \label{ourii}
        \end{align}
        Equation \eqref{ourii} was known 
        \cite{GarifullinYamilov2012,GarifullinMikhailovYamilov2014} to be
        is a relativistic Toda type equation.
        Since in \cite{GarifullinYamilovLevi2016} it was shown that the equations of List 
        \thelistcount\ are related through autonomous non-invertible non-point transformations,
        it was suggested that \eqref{our1} and \eqref{our2} should be of the same type.
        Finally, we note that equation \eqref{our1} appeared in \cite{Adler2016b} earlier than in 
        \cite{GarifullinYamilovLevi2016}.
        \setcounter{eqlistII}{8}
    \item Equations related to the Itoh-Narita-Bogoyavlensky (INB) equation:
        \begin{align}
            \label{INB}
            \dv{u_0}{t}&=u_0(u_2+u_1-u_{-1}-u_{-2}),
            \stepcounter{eqlistI}
            \tag{\theeqlistI}
            \\
            \stepcounter{eqlistI}
            \tag{\theeqlistI}
            \label{mod_INB}
            \dv{u_0}{t}&
            \begin{aligned}[t]
                &=(u_{2}-u_{1}+a)(u_0-u_{-1}+a)
                \\
                &+(u_1-u_0+a)(u_{-1}-u_{-2}+a)
                \\
                &+(u_1-u_0+a)(u_0-u_{-1}+a)+b,
            \end{aligned}
            \\
            \dv{u_0}{t} &= (u_0^2+au_0)(u_2u_1-u_{-1}u_{-2}),
            \stepcounter{eqlistI}
            \tag{\theeqlistI}
            \label{mikh1}
            \\
            \stepcounter{eqlistI}
            \tag{\theeqlistI}
            \label{eq1}
            \dv{u_0}{t} &= (u_1-u_0)(u_0-u_{-1})\left(\frac {u_2}{u_1}-\frac{u_{-2}}{u_{-1}}\right),
            \\
            \dv{u_0}{t}&=u_0(u_2u_1-u_{-1}u_{-2}),
            \stepcounter{eqlistII}
            \tag{\theeqlistII}
            \label{INB1}
            \\
            \dv{u_0}{t}&=(u_1-u_0+a)(u_0-u_{-1}+a)(u_2-u_1+u_{-1}-u_{-2}+2a)+b,
            \stepcounter{eqlistII}
            \tag{\theeqlistII}
            \label{INB3}
            \\
            \dv{u_0}{t}&=u_0(u_1u_0-a)(u_0u_{-1}-a)(u_2u_1-u_{-1}u_{-2}),
            \stepcounter{eqlistII}
            \tag{\theeqlistII}
            \label{MX2}
            \\
            \dv{u_0}{t}&=(u_1+u_0)(u_0+u_{-1})(u_2+u_1-u_{-1}-u_{-2}).
            \stepcounter{eqlistII}
            \tag{\theeqlistII}
            \label{INB2}
        \end{align}
        Equation \eqref{INB} is the well-known INB equation 
        \cite{Bogoyavlensky1988,Itoh1975,Narita1982}.
        Equations \eqref{mod_INB} with $a=0$ and \eqref{mikh1} with $a=0$ 
        are  simple modifications of the INB and were presented  in \cite{MikhailovXenitidis2013} 
        and \cite{Bogoyavlensky1991}, respectively.
        Equation \eqref{mikh1} with $a=1$ has been found in 
        \cite{Suris2003book}.
        Up to an obvious linear transformation, it is equation (17.6.24) 
        with $m=2$ in \cite{Suris2003book},.
        Equation \eqref{INB1} is a well-known modification of  INB equation \eqref{INB}, 
        found by Bogoyalavlesky himself \cite{Bogoyavlensky1988}. 
        Finally, equation \eqref{MX2} with $a=0$ was considered in \cite{AdlerPostnikov2008}. 
        All the equations in this list can be reduced to the INB equation
        using autonomous non-invertible non-point transformations.
        Moreover, equations \eqref{mod_INB},\eqref{eq1} and \eqref{INB1} are related through
        non-invertible transformations to the equation:
        \begin{align}
            \dv{u_{0}}{t} &= \left( u_{2}-u_{0} \right)\left( u_{1}-u_{-1} \right)
                \left( u_{0}-u_{-2} \right).
            \label{eq311}
        \end{align}
        For this reason, as it was done in \cite{GGY_autom}, we will consider equation
        \eqref{eq311}, as independent.
        We note that equation \eqref{eq311} and its relationship with equation 
        \eqref{mod_INB},\eqref{eq1} and \eqref{INB1} were first discussed 
        in \cite{GarifullinYamilovLevi2016_non_inv}.
    \item Other equations:
        \begin{align}
            \dv{u_0}{t} &= u_0^2(u_2u_1-u_{-1}u_{-2})-u_0(u_1-u_{-1}),
            \stepcounter{eqlistI}
            \tag{\theeqlistI}
            \label{seva}
            \\
            \dv{u_0}{t} &
            \begin{aligned}[t]
                &= (u_0+1)\times
                \\
                &\phantom{\times}\left[\frac{u_2u_0(u_{1}+1)^2}{u_1}-\frac{u_{-2}u_0(u_{-1}+1)^2}{u_{-1}}
            +(1+2u_0)(u_1-u_{-1})\right],
            \end{aligned}
            \stepcounter{eqlistI}
            \tag{\theeqlistI}
            \label{rat}
            \\
            \dv{u_0}{t} &= (u_0^2+1)\left(u_2\sqrt{u_1^2+1}-u_{-2}\sqrt{u_{-1}^2+1}\right),
            \stepcounter{eqlistI}
            \tag{\theeqlistI}
            \label{sroot}
            \\
            \dv{u_0}{t}&=u_1u_0^3u_{-1}(u_2u_1-u_{-1}u_{-2})-u_0^2(u_1-u_{-1}).
            \setcounter{eqlistII}{14}
            \tag{\theeqlistII}
            \label{SK2}
        \end{align}
        Equation \eqref{seva} has been found in \cite{TsujimotoHirota1996} 
        and it is called the discrete Sawada-Kotera equation 
        \cite{Adler2011,TsujimotoHirota1996}. 
        Equation \eqref{SK2} is a simple modification of the discrete
        Sawada-Kotera equation \eqref{seva}. 
                Equation \eqref{rat} has been found in \cite{Adler2016b} and 
        is related to \eqref{seva}.
        On the other hand equation \eqref{sroot} has been found as 
        a result of the classification in \cite{GarifullinYamilovLevi2016} 
        and seems to be a new equation. 
        It was shown in \cite{GarifullinYamilov2017} that equation 
        \eqref{sroot} is a discrete analogue of the Kaup-Kupershmidt
        equation.
        Then we will refer to equation \eqref{sroot} as the discrete
        Kaup-Kupershmidt equation.
        No transformation into known equations 
        of equation \eqref{sroot} is known.
\end{description}

In this paper we will show that all the bi-rational equations of Lists 1--6 possess
quadratic (linear) growth, and hence are integrable (linearizable) according
to the algebraic entropy test.
We note that the only non-bi-rational equations equation in Classes I and
II is the discrete Kaup-Kupershmidt equation \eqref{sroot} which contains
square root terms.
In section \ref{sec:theory} we will give some details on how algebraic
entropy is computed, then in
section \ref{sec:results} we will show the results for the equations of
Lists 1--6 except the discrete Kaup-Kupershmidt equation \eqref{sroot}.
In section \ref{sec:disc} we will discuss the results obtained in sections
\ref{sec:results} 
in the framework of the existing literature
and we will give an outlook on future research in the field.

Before going on we would like to present a new \emph{rational} form
of the discrete Kaup-Kupershmidt equation \eqref{sroot}.
That is we have the following proposition:

\begin{proposition}
    There exists a point transformation which brings the discrete
    Kaup-Kupershmidt equation \eqref{sroot} into the following rational form:
    \begin{equation}
        \dv{v_{0}}{t}
        =\left( 1+v_{0}^{2} \right)
        \left[ \frac{1+v_{1}^{2}}{1-v_{1}^{2}}\frac{v_{2}}{1-v_{2}^{2}}- 
        \frac{1+v_{-1}^{2}}{1-v_{-1}^{2}}\frac{v_{-2}}{1-v_{-2}^{2}}\right].
        \label{srootrat}
    \end{equation}
    \label{prop:ratkk}
\end{proposition}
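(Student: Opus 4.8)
The plan is to eliminate the only source of irrationality in \eqref{sroot}, the radicals $\sqrt{u_n^2+1}$, by a single autonomous change of variable $u_n=f(v_n)$ applied identically at every lattice site. Since the affine conic $w^2=u^2+1$ has genus zero it admits a rational parametrisation, and the one tailored to this equation is the hyperbolic analogue of the Weierstrass (tangent half-angle) substitution
\[
    u_n=\frac{2v_n}{1-v_n^2}.
\]
A one-line computation then gives
\[
    u_n^2+1=\frac{4v_n^2+(1-v_n^2)^2}{(1-v_n^2)^2}=\frac{(1+v_n^2)^2}{(1-v_n^2)^2},
\]
so that, on the domain $\lvert v_n\rvert<1$ where $1-v_n^2>0$, the positive square root becomes rational: $\sqrt{u_n^2+1}=(1+v_n^2)/(1-v_n^2)$.

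Next I would verify that this is a point transformation in the required sense: it is autonomous, with the same form at every $n$, and on $\{\lvert v_n\rvert<1\}$ it is invertible with smooth inverse $v_n=(\sqrt{1+u_n^2}-1)/u_n$ (and $v_n=0$ at $u_n=0$), mapping $\R$ bijectively onto $(-1,1)$. Differentiating the substitution along $t$ gives, by the chain rule,
\[
    \dv{u_0}{t}=\frac{2(1+v_0^2)}{(1-v_0^2)^2}\,\dv{v_0}{t},
\]
while on the right-hand side of \eqref{sroot} one has $u_0^2+1=(1+v_0^2)^2/(1-v_0^2)^2$ and
\[
    u_{\pm 2}\sqrt{u_{\pm 1}^2+1}=\frac{2v_{\pm 2}}{1-v_{\pm 2}^2}\,\frac{1+v_{\pm 1}^2}{1-v_{\pm 1}^2}.
\]
Inserting all of this into \eqref{sroot} and cancelling the common factor $2(1+v_0^2)/(1-v_0^2)^2$ from both sides leaves exactly \eqref{srootrat}.

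The calculation is short and elementary, so the only delicate point — and the nearest thing to an obstacle — is the choice of branch for the square roots. The map $u_n=2v_n/(1-v_n^2)$ is two-to-one as a self-map of $\CP^1$, so ``point transformation'' must be understood locally: one fixes the branch $\sqrt{u_n^2+1}=(1+v_n^2)/(1-v_n^2)>0$, equivalently restricts all the sites $n-2,\dots,n+2$ to $\lvert v_n\rvert<1$, and notes that this is consistent throughout \eqref{sroot}. With the branch fixed, matching the two right-hand sides is merely clearing denominators, and nothing further is needed. It is worth remarking that this substitution also explains why the classification of \cite{GarifullinYamilovLevi2016} produced \eqref{sroot} in radical form rather than \eqref{srootrat}: the two equations are related by an invertible but not globally single-valued change of variables.
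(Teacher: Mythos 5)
Your proof is correct and is essentially the paper's own argument: the paper first substitutes $u_{n}=\sinh(\varphi_{n})$ to reach a hyperbolic form and then sets $v_{n}=\tanh(\varphi_{n}/2)$, and the composite of these two steps is exactly your single rational substitution $u_{n}=2v_{n}/(1-v_{n}^{2})$, with the same cancellation of the factor $2(1+v_{0}^{2})/(1-v_{0}^{2})^{2}$. Your additional care about the branch of $\sqrt{u_{n}^{2}+1}$ is a sound refinement but does not alter the route.
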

\begin{proof}
    We start with the substitution:
    \begin{equation}
        u_{n} = \sinh\left( \varphi_{n} \right),
        \label{eq:hyp}
    \end{equation}
    which brings the discrete Kaup-Kupershmidt equation in hyperbolic
    form\footnote{D. Levi, private communication.}:
    \begin{equation}
        \dot{\varphi}_{0} = \cosh\left( \varphi_{n} \right)
        \left[ \cosh\left( \varphi_{n+1} \right)\sinh\left( \varphi_{n+2} \right)-
        \cosh\left( \varphi_{n-1}\right)\sinh\left( \varphi_{n-2} \right) \right].
        \label{sroothyp}
    \end{equation}
    Using the hyperbolic identities:
    \begin{equation}
        \sinh\alpha = \frac{2 \tanh\left( \alpha/2 \right)}{1-\tanh^{2}\left( \alpha/2 \right)},
        \quad
        \cosh\alpha = \frac{1+ \tanh^{2}\left( \alpha/2 \right)}{1-\tanh^{2}\left( \alpha/2 \right)},
        \label{eq:hypid}
    \end{equation}
    and putting
    \begin{equation}
        \tanh\left( \frac{\varphi_{n}}{2} \right) = v_{n}
        \label{eq:convhyp}
    \end{equation}
    equation \eqref{srootrat} follows.
\end{proof}

\begin{remark}
    We note that under the scaling:
    \begin{equation}
        u_{n}(t) = \imath\left[ \sqrt{2}+\frac{\sqrt{2}}{8}\epsilon^2 
        U\left(\tau-\frac{2}{135}\epsilon^5 t, x+\frac{4}{9}\epsilon t\right)\right],
        \,
        x = n\varepsilon,
        \label{eq:clim}
    \end{equation}
    equation \eqref{srootrat} admits the Kaup-Kaupershmit equation as continuum limit:
    \begin{equation}
        U_{\tau} = U_{xxxxx} + 5 U U_{xxx} + \frac{25}{2} U_{x}U_{xx}+5 U^{2}U_{x},
        \label{eq:kk}
    \end{equation}
    just as the original \eqref{sroot} equation.
    \label{rem:cl}
\end{remark}


\section{Algebraic entropy}
\label{sec:theory}

Heuristically integrability deals with the regularity of the
solutions of a given system.
In this sense a simple characterisation of chaotic behaviour
is when two arbitrarily near initial values give rise to solutions
diverging at infinity.
For recurrence relations, i.e. equations where the solution is
given by iteration of a formula, we could just try to compute
the iteration to extract information about integrability,
even if we cannot solve the equation explicitly. 
However it is usually impossible to calculate explicitly these 
iterates by hand or even with any state-of-the-art
formal calculus software, simply because the expressions one should manipulate
are rational fractions of increasing degree of the various initial conditions. 
The complexity and size of the calculation make it impossible to  calculate the iterates.

It was nevertheless observed that ``integrable'' maps are not as
complex as generic ones. This was done primarily experimentally, 
by an accumulation of examples, and later by the elaboration of the concept of 
\emph{algebraic entropy} for difference equations
\cite{BellonViallet1999,FalquiViallet1993,Veselov1992,Diller1996,Russakovskii1997}.
In \cite{Tremblay2001,Viallet2006} the method was developed in the case
of quad equations and then used as a classifying tool \cite{HietarintaViallet2007}.
Finally in \cite{DemskoyViallet2012} the same concept
was introduced for differential-difference equation and later \cite{viallet2014}
to the very similar case of differential-delay equations.
For a more complete discussion of the method in the context of the so-called
integrability indicators we refer to 
\cite{GrammaticosHalburdRamaniViallet2009,GubbiottiASIDE16}.


As we stated in the introduction algebraic entropy is a measure of
the growth of bi-rational maps.
The most natural space for considering bi-rational maps is the projective
space over a closed field rather than in the affine space one.
We then transforms a recurrence relation into a polynomial map in 
the homogeneous coordinates of the
proper projective space over some closed field:
\begin{equation}
    \varphi \colon x_{i} \mapsto \varphi_{i}\left( x_{k} \right),
    \label{eq:projmap}
\end{equation}
with $x_{i},x_{k}\in\IN$ where $\IN$ is the space of the initial conditions.
The recurrence is then obtained by iterating the polynomial map $\varphi$.
The map $\varphi$ has to be \emph{bi-rational} in the sense that it
has to possesses an inverse map which is again a rational map.

The space of the initial condition depends on which type of recurrence relation
we are considering. 
In the case of differential-difference equation of the discrete $k-k'$-th order
and of the $p$-th continuous order:
\begin{equation}
    u_{n+k} = f_{n}\left( \left\{\dv[i]{u_{n+k-1}}{t},\ldots,
    \dv[i]{u_{n+k'+1}}{t}\right\}_{i=0}^{p};u_{n+k'}\right),
    \quad 
    k',k,n\in\Z,\, k'<k
    \label{eq:ddele}
\end{equation}
the space of initial conditions is infinite dimensional. 
Indeed, in the case the order of the equation is $k-k'$, 
we need the initial value of $k-k'$-tuple as a function of the
parameter $t$, but also the value of \emph{all its derivatives}:
\begin{equation}
    \IN = \left\{\dv[i]{u_{k-1}}{t},
        \dv[i]{u_{k-2}}{t},
        \dots,
        \dv[i]{u_{k'}}{t},
        \right\}_{i\in\N_{0}}.
    \label{eq:inidde}
\end{equation}
We need all the derivatives of $u_{i}\left( t \right)$
and not just the first $p$ because at every iteration the order
of the equation is raised by $p$.
Therefore, to describe infinitely many iterations we need
infinitely many derivatives.
To obtain the map one just need to pass to homogeneous 
coordinates in the equation and in \eqref{eq:inidde}.

\begin{remark}
    We remark that if we restrict to compute a finite number of iterates
    of a differential-difference equation of discrete $k-k'$-th order
    and of the $p$-th continuous order \eqref{eq:ddele} then the space 
    of initial conditions is finite dimensional.
    Indeed, let us assume that we wish to compute the $N$th iterate of a 
    differential-difference equation \eqref{eq:ddele}, then at most we will
    need the derivatives of order $N\left( p+1 \right)$.
    That is, we need to consider the following restricted space of initial
    conditions:
    \begin{equation}
        \IN^{(N)} = \left\{\dv[i]{u_{k-1}}{t},
            \dv[i]{u_{k-2}}{t},
        \dots, \dv[i]{u_{k'}}{t},\right\}_{i=0}^{N\left( p+1 \right)}.
        \label{eq:iniddeN}
    \end{equation}
    \label{rem:raiseorder}
\end{remark}

If we factor out any common polynomial factors we can say that 
the degree with respect to the initial conditions is well defined.
We can therefore form the sequence of degrees
of the iterates of the map $\varphi$ and call it $d_{N}=\deg \varphi^{N}$:
\begin{equation}
    1,d_{1},d_{2},d_{3},d_{4},d_{5},\dots,d_{N},\dots.
    \label{eq:onedimdegrees}
\end{equation}
The degree of the bi-rational projective map $\varphi$ has to
be understood as the \emph{maximum of the total polynomial degree in the initial
conditions} $\IN$ of the entries of $\varphi$.
The same definition in the affine case just translates to the \emph{maximum
of the degree of the numerator and of the denominator} of
the $N$th iterate in terms of the affine initial conditions.
Degrees in the projective and in the affine setting can be different,
but the global behaviour will be the same due to the properties of homogenization
and de-homogenization.

The sequence of degree \eqref{eq:onedimdegrees} is fixed in a given 
system of coordinates, but it is not invariant with respect 
to changes of coordinates.
Therefore we need to introduce a canonical measure of the growth.
It turn out that a good definition is the following one:
consider the following number
\begin{equation}
    \eta_{\varphi} = \lim_{N\to\infty} \frac{1}{N}\log d_{N},
    \label{eq:algent}
\end{equation}
called the \emph{algebraic entropy} of the map $\varphi$
When no confusion is possible about the map $\varphi$ we will usually
omit the subscript $\varphi$ in \eqref{eq:algent}.

Algebraic entropy for bi-rational maps has the following properties 
\cite{BellonViallet1999,GubbiottiASIDE16,GrammaticosHalburdRamaniViallet2009}:
\begin{enumerate}
    \item The algebraic entropy as given by \eqref{eq:algent} always exists.
    \item The algebraic entropy has the following upper bound:
        \begin{equation}
            \eta_{\varphi} \leq \deg \phi.
            \label{eq:algentineq}
        \end{equation}
    \item If $\eta_{\varphi}=0$, i.e. the algebraic entropy is zero, then
        \begin{equation}
            d_{N} \sim N^{\nu}, \quad \text{with $\nu \in \N_{0}$, as $N\to\infty$.}
            \label{eq:dkasympt}
        \end{equation}
    \item The algebraic entropy is a \emph{bi-rational invariant of bi-rational maps}.
        That is, if two bi-rational maps $\varphi$ and $\psi$ are conjugated
        by a bi-rational map $\chi$,
        \begin{equation}
            \varphi = \chi \circ \psi \circ \chi^{-1}
            \label{eq:conj}
        \end{equation}
        then:
        \begin{equation}
            \eta_{\varphi} = \eta_{\psi}.
            \label{eq:invariance}
        \end{equation}
        \label{prop:invariance}
\end{enumerate}
Property \ref{prop:invariance} tell us that the algebraic entropy 
is a canonical measure of growth for bi-rational maps.

We will then have the following classification of
equations according to their Algebraic Entropy
\cite{HietarintaViallet2007}:
\begin{description}
    \item[Linear growth] The equation is linearizable.
    \item[Polynomial growth] The equation is integrable.
    \item[Exponential growth] The equation is chaotic.
\end{description}

In our the following sections we will be dealing with 
differential-difference equations of first continuous order
and fourth discrete order of the particular form:
\begin{equation}
    u_{n+2} = f\left( u_{n+1},u_{n},u_{n-1},u_{n-2},\dv{u_{n}}{t}\right).
    \quad n\in\Z,
    \label{eq:ddelespec}
\end{equation}

To practically compute the algebraic entropy we introduce
some technical methods to reduce the computational complexity
\cite{GubbiottiASIDE16,GubbiottiPhD2017}.
First, we fix the desired number of iterations to be some fixed $N\in\N$.
Following remark \ref{rem:raiseorder} this means that we need only 
finitely many initial conditions given by \eqref{eq:iniddeN}. 
Then we assume that the space of initial conditions is 
\emph{linearly parametrised} in the appropriate projective space, 
i.e. in inhomogenous coordinates it has the following form:
\begin{equation}
    u_{i} = \frac{\alpha_{i} t + \beta_{i}}{\alpha_{0}t+\beta_{0}},
    \quad
    u_{i} \in \IN^{(N)}.
    \label{eq:linpar}
\end{equation}
We will assume that the parameter $t$ is the same which describes the 
``time'' evolution of the problem.
To simplify the problem we choose all the parameters involved in the 
equations to be integers.
Moreover, to avoid accidental factorisations which may alter the
results we choose these integers to be \emph{prime numbers}.
A final simplification to speed up the computations is given
by considering the factorisation of the iterates in some finite field
$\K_{r}$, with $r$ prime number.

\begin{remark}
    Several equations in Lists 1--6, e.g. \eqref{Vol_mod1}
    or \eqref{Volz}, depend on some parameters.
    Depending on the value of the parameters their integrability properties can be, 
    in principle, different.
    As it was done in \cite{GGY_autom}, in order to avoid ambiguities, 
    we use some simple autonomous transformations to fix the values of some parameters.
    The remaining free parameters are then treated as free coefficients
    and then fixed to integers following the above discussion.
    We will describe these subcases when needed in the next section.
    \label{rem:params}
\end{remark}

Using the rules above we are able to avoid accidental cancellations
and produce a finite sequence of degrees:
\begin{equation}
    1,d_{1},d_{2},d_{3},d_{4},d_{5},\dots,d_{N}.
    \label{eq:onedimdegreesfin}
\end{equation}
To extract the asymptotic behaviour from the finite sequence \eqref{eq:onedimdegreesfin}:
we compute its generating function, i.e. a function $g=g\left( s \right)$ such
that the coefficients of its Taylor series 
\begin{equation}
    g(z) = \sum_{l=0}^{\infty} d_{l} z^{l}
    \label{eq:genfunc}
\end{equation}
up to order $N$ coincides with the finite sequence \eqref{eq:onedimdegreesfin}.
Assuming that such generating function is rational it can be 
computed using the method of Pad\'e approximants
\cite{Pade1892,BakerGraves1996}. 

This generating function is \emph{predictive} tool. 
Indeed one can readily compute the successive terms
in the Taylor expansion for \eqref{eq:genfunc} and confront them with
the degrees calculated with the iterations. 
This means that the assumption that the value of the algebraic entropy given 
by the approximate method is in fact very strong and very unlikely the 
real value will differ from it. 

Having a rational generating function will also yield the value
of the Algebraic Entropy from the modulus of the
smallest pole of the generating function:
\begin{equation}
    \eta_{\varphi} = 
    \log \min\left\{|z|\in\R^{+} \,\,\middle|\,\, \frac{1}{g\left( z \right)}=0\right\}.
    \label{eq:algentgenfunc}
\end{equation}
From the generating function one can also find an asymptotic fit for 
the degrees \eqref{eq:onedimdegreesfin}. 
This can be done by using the $\mathcal{Z}$-transform \cite{Elaydi2005,Jury1964}.
Indeed, from the definition of $\mathcal{Z}$-transform it can be readily proved 
that:
\begin{equation}
    d_{l} = \mathcal{Z}\left[g\left( \frac{1}{\zeta} \right)\right]_{l},
    \label{eq:ztransform}
\end{equation}
where $\mathcal{Z}[f(\zeta)]_{l}$ is the $\mathcal{Z}$-transform of
the function $f(\zeta)$.

\begin{remark}
    We note that the  general asymptotic behaviour of the
    sequence $\left\{ d_{l} \right\}_{l\in\N_{0}}$ can be obtained even without computing the 
    $\mathcal{Z}$-transform.
    Indeed, let us assume that the given generating $g$ function
    has radius of convergence $\rho>0$. 
    Then, let us assume that the generating function can be written in the following way:
    \begin{equation}
        g = A\left( z \right) + B\left( z \right)\left( 1-\frac{z}{\rho} \right)^{-\beta},
        \label{eq:gasy}
    \end{equation}
    where $A$ and $B$ are analytic functions for $\abs{z}<r$ such that
    $B(\rho)\neq 0$.
    Then we have the following estimate:
    \begin{equation}
        d_{N} \sim \frac{B\left( \rho \right)}{\Gamma\left( \beta \right)}
        N^{\beta-1}\rho^{-N},
        \quad
        N \to \infty
        \label{eq:dasy}
    \end{equation}
    where $\Gamma(z)$ is the Euler Gamma function.
    When the radius of convergence is one, i.e. when the given equation
    is integrable, we have the simpler estimate
    \begin{equation}
        d_{N} \sim N^{\beta-1},
        \quad
        N \to \infty.
        \label{eq:dasyint}
    \end{equation}
    \label{rem:asydeg}
\end{remark}

\section{Results}
\label{sec:results}

In this section we describe the results of the procedure outlined in section 
\ref{sec:theory} for the differential-difference equations of Lists 1--6.
Specifically, as described in Remark \ref{rem:params}, we will underline
the particular cases in which the parametric equations can be divided.
We notice that certain equations are \emph{symmetric} under the involution
\begin{equation}
    u_{n} \to \tilde{u}_{n} = u_{-n}.
    \label{eq:inv}
\end{equation}
This implies that the recurrence defined by solving the
equation with respect to $u_{2}$ and $u_{-2}$ is the same.
For equations satisfying this property the growth of
the degree of the iterates can be computed just in one direction,
as the growth in the other direction will be the same.
Computations are performed using the \texttt{python} program for
differential-difference equations presented in \cite{GubbiottiPhD2017}.
We remark that this program was already employed to discuss
the integrability of some three-point differential-difference
equations in \cite{GSL_symmetries}.

\subsection{List 1}

\subsubsection{Equation \eqref{Vol}:}
Equation \eqref{Vol} is symmetric and has the following growth of degrees:
\begin{equation}
    1, 2, 2, 4, 4, 7, 7, 11, 11, 16, 16, 22, 22\dots.
    \label{eq:growthVol}
\end{equation}
The generating function corresponding to the growth 
\eqref{eq:growthVol} is:
\begin{equation}
    g(z) = -\frac{z^4 - 2 z^2 + z + 1}{(z - 1)^3 (z + 1)^2}.
    \label{eq:gfVol}
\end{equation}
All the poles of $g$ lie on the unit circle, so that the entropy
is zero.
Moreover, due to the presence of the factor $\left( z-1 \right)^{3}$
following remark \ref{rem:asydeg} we have that equation
\eqref{Vol} has quadratic growth.

\subsubsection{Equation \eqref{Vol0}:}
Equation \eqref{Vol0} is symmetric and has the following growth of degrees:
\begin{equation}
    1, 3, 3, 7, 7, 13, 13, 21, 21, 31, 31, 43, 43\dots.
    \label{eq:growthVol0}
\end{equation}
The generating function corresponding to the growth 
\eqref{eq:growthVol0} is:
\begin{equation}
    g(z) = -\frac{z^4 - 2 z^2 + 2 z + 1}{(z - 1)^3 (z + 1)^2}.
    \label{eq:gfVol0}
\end{equation}
All the poles of $g$ lie on the unit circle, so that the entropy
is zero.
Moreover, due to the presence of the factor $\left( z-1 \right)^{3}$
following remark \ref{rem:asydeg} we have that equation
\eqref{Vol0} has quadratic growth.

\subsubsection{Equation \eqref{Vol1}:}
Equation \eqref{Vol1} is symmetric and has the same growth of degrees
as equation \eqref{Vol0}.
Therefore we have that equation \eqref{Vol1} has zero entropy
and quadratic growth.

\subsubsection{Equation \eqref{Vol_mod}:}
Equation \eqref{Vol_mod} is symmetric and has the following growth of degrees:
\begin{equation}
    1, 2, 3, 5, 6, 8, 10, 14, 16, 20, 23, 29, 32, 38, 42, 50, 54\dots.
    \label{eq:growthVol_mod}
\end{equation}
The generating function corresponding to the growth 
\eqref{eq:growthVol_mod} is:
\begin{equation}
    g(z) = -\frac{z^7 + z^6 - z^5 - z^4 + z^3 + z + 1}{(z - 1)^3 (z + 1)^2 (z^2 + 1)}.
    \label{eq:gfVol_mod}
\end{equation}
All the poles of $g$ lie on the unit circle, so that the entropy
is zero.
Moreover, due to the presence of the factor $\left( z-1 \right)^{3}$
following remark \ref{rem:asydeg} we have that equation
\eqref{Vol_mod} has quadratic growth.

\subsubsection{Equation \eqref{Vol_mod1}:}
Equation \eqref{Vol_mod1} depends on the parameter $a$.
Using a simple scaling if $a\neq0$ it is possible to set $a=1$. 
For this reason we can consider the two cases $a=1$ and $a=0$.
If $a=1$ equation \eqref{Vol_mod1} is asymmetric, but it has
the following growth of degrees in both directions:
\begin{equation}
    \begin{aligned}
        &1, 2, 3, 5, 6, 8, 10, 14, 16, 20, 23, 29, 32, 38, 
        \\
        &\quad42,50, 54, 62, 67, 77, 82, 92, 98, 110, 116\dots.
    \end{aligned}
    \label{eq:growthVol_mod1}
\end{equation}
The generating function corresponding to the growth 
\eqref{eq:growthVol_mod1} is:
\begin{equation}
    g(z) = -\frac{z^7 + z^6 - z^5 - z^4 + z^3 + z + 1}{(z - 1)^3 (z + 1)^2 (z^2 + 1)}.
    \label{eq:gfVol_mod1}
\end{equation}
If $a=0$ equation \eqref{Vol_mod1} is symmetric, but its growth of
degrees is still given by the sequence \eqref{eq:growthVol_mod1}
and fitted by the generating function \eqref{eq:gfVol_mod1}.
Therefore in both cases the entropy is zero since all the poles of 
$g$ lie on the unit circle.
Moreover, due to the presence of the factor $\left( z-1 \right)^{3}$
following remark \ref{rem:asydeg} we have that equation
\eqref{Vol_mod1} has quadratic growth for all values of $a$.

%
%

\subsubsection{Equation \eqref{Vol2}:}
Equation \eqref{Vol2} is symmetric and has the following growth of degrees:
\begin{equation}
    1, 5, 8, 14, 19, 28, 35, 47, 56, 71, 82, 100, 113\dots.
    \label{eq:growthVol2}
\end{equation}
The generating function corresponding to the growth 
\eqref{eq:growthVol2} is:
\begin{equation}
    g(z) = -\frac{z^5 - 2 z^3 + z^2 + 4 z + 1}{(z - 1)^3 (z + 1)^2}.
    \label{eq:gfVol2}
\end{equation}
All the poles of $g$ lie on the unit circle, so that the entropy
is zero.
Moreover, due to the presence of the factor $\left( z-1 \right)^{3}$
following remark \ref{rem:asydeg} we have that equation
\eqref{Vol2} has quadratic growth.

\subsubsection{Equation \eqref{eq1ii}:}
Equation \eqref{eq1ii} is symmetric and has the following growth of degrees:
\begin{equation}
    1, 3, 4, 6, 8, 12, 15, 19, 24, 29, 34, 40, 47, 54, 61, 69, 78, 87, 96, 106, 117\dots.
    \label{eq:growtheq1ii}
\end{equation}
The generating function corresponding to the growth 
\eqref{eq:growtheq1ii} is:
\begin{equation}
    g(z) = -\frac{z^9 - 3 z^8 + 4 z^7 - 4 z^6 + 4 z^5 - 3 z^4 + 2 z^3 - z^2 + 1}{(z - 1)^3 (z^2 + 1)}.
    \label{eq:gfeq1ii}
\end{equation}
All the poles of $g$ lie on the unit circle, so that the entropy
is zero.
Moreover, due to the presence of the factor $\left( z-1 \right)^{3}$
following remark \ref{rem:asydeg} we have that equation
\eqref{eq1ii} has quadratic growth.

\subsubsection{Equation \eqref{eq2ii}:}
Equation \eqref{eq2ii} is symmetric and has the following growth of degrees:
\begin{equation}
    \begin{aligned}
        &1, 5, 7, 13, 18, 27, 34, 45, 54, 69, 80, 97, 110,
        \\
        &\quad 131,146, 169, 186, 213, 232, 261, 282\dots.
    \end{aligned}
    \label{eq:growtheq2ii}
\end{equation}
The generating function corresponding to the growth 
\eqref{eq:growtheq2ii} is:
\begin{equation}
    g(z) = -\frac{z^9 - z^8 + z^6 - z^5 + 2 z^4 + 2 z^3 + z^2 + 4 z + 1}{(z - 1)^3 (z + 1)^2 (z^2 + 1)}.
    \label{eq:gfeq2ii}
\end{equation}
All the poles of $g$ lie on the unit circle, so that the entropy
is zero.
Moreover, due to the presence of the factor $\left( z-1 \right)^{3}$
following remark \ref{rem:asydeg} we have that equation
\eqref{eq2ii} has quadratic growth.

\subsection{List 2}

\subsubsection{Equation \eqref{Bur2}:}

Equation \eqref{Bur2} depends on four parameters $a$, $b$, $c$ and $d$
linked among themselves by the condition $\left( a+1 \right)d=bc$.
Using a linear transformation $u_{n,m}\to\alpha u_{n,m}+\beta$
we need to consider only three different cases: 
\begin{enumerate}
    \item $ a\neq0,a\neq -1,b=0,d=0$, 
    \item $a=-1,b=1,c=0$,
    \item $a=-1,b=0$.
\end{enumerate}
Recall that $a\neq0$ in all cases.
See \cite{GGY_autom} for more details.
In all the three cases equation \eqref{Bur2} is asymmetric.
However, it has the same growth of degrees in both directions 
and in all the three cases:
\begin{equation}
    1, 4, 7, 10, 13, 16, 19, 22, 25, 28, 31, 34, 37\dots.
    \label{eq:growthBur2}
\end{equation}
The generating function corresponding to the growth 
\eqref{eq:growthBur2} is:
\begin{equation}
    g(z) = -\frac{z^7 + z^6 - z^5 - z^4 + z^3 + z + 1}{(z - 1)^3 (z + 1)^2 (z^2 + 1)}.
    \label{eq:gfBur2}
\end{equation}
In all cases the entropy is zero since all the poles of 
$g$ lie on the unit circle.
Moreover, due to the presence of the factor $\left( z-1 \right)^{2}$
following remark \ref{rem:asydeg} we have that equation
\eqref{Bur2} has linear growth for all values of the parameters.

%

%
%

\subsubsection{Equation \eqref{Bur}:}
Equation \eqref{Bur} is not symmetric, but in both directions has the 
following growth of degrees:
\begin{equation}
    1, 4, 6, 9, 11, 14, 16, 19, 21, 24, 26, 29, 31\dots.
    \label{eq:growthBur}
\end{equation}
The generating function corresponding to the growth 
\eqref{eq:growthBur} is:
\begin{equation}
    g(z) = \frac{z^2 + 3 z + 1}{(z - 1)^2 (z + 1)}.
    \label{eq:gfBur}
\end{equation}
All the poles of $g$ lie on the unit circle, so that the entropy
is zero.
Moreover, due to the presence of the factor $\left( z-1 \right)^{2}$
following remark \ref{rem:asydeg} we have that equation
\eqref{Bur} has linear growth.

\subsection{List 3}
\subsubsection{Equation \eqref{Vol1s}}
Equation \eqref{Vol1s} is symmetric and has the following growth of degrees:
\begin{equation}
    1, 3, 6, 10, 16, 22, 29, 37, 46, 56, 67, 79, 92\dots.
    \label{eq:growthVol1s}
\end{equation}
The generating function corresponding to the growth 
\eqref{eq:growthVol1s} is:
\begin{equation}
    g(z) = -\frac{z^6 - 2 z^5 + z^4 + 1}{(z - 1)^3}.
    \label{eq:gfVol1s}
\end{equation}
All the poles of $g$ lie on the unit circle, so that the entropy
is zero.
Moreover, due to the presence of the factor $\left( z-1 \right)^{3}$
following remark \ref{rem:asydeg} we have that equation
\eqref{Vol1s} has quadratic growth.

\subsubsection{Equation \eqref{mVol2}}
Equation \eqref{mVol2} depends on the parameter $a$.
Using a simple scaling if $a\neq0$ it is possible to set $a=1$. 
For this reason we can consider the two cases $a=1$ and $a=0$.
Equation \eqref{mVol2} is symmetric for both $a=1$ and $a=0$.
Moreover, in both cases it has the following growth of degrees:
\begin{equation}
    1, 5, 11, 21, 31, 43, 57, 73, 91, 111, 133, 157, 183\dots.
    \label{eq:growthmVol2}
\end{equation}
The generating function corresponding to the growth 
\eqref{eq:growthmVol2} is:
\begin{equation}
    g(z) = -\frac{2 z^5 - 4 z^4 + 2 z^3 - z^2 + 2 z + 1}{z - 1)^3}.
    \label{eq:gfmVol2}
\end{equation}
Therefore in both cases the entropy is zero since all the poles of 
$g$ lie on the unit circle.
Moreover, due to the presence of the factor $\left( z-1 \right)^{3}$
following remark \ref{rem:asydeg} we have that equation
\eqref{mVol2} has quadratic growth for all values of $a$.

%
%

\subsubsection{Equation \eqref{Volz}:}
Equation \eqref{Volz} is not symmetric, but it has the same growth of degrees 
in both directions:
\begin{equation}
    1, 3, 4, 7, 10, 15, 19, 25, 31, 39, 46, 55, 64\dots.
    \label{eq:growthVolz}
\end{equation}
The generating function corresponding to the growth 
\eqref{eq:growthVolz} is:
\begin{equation}
    g(z) = -\frac{z^5 - z^4 + 2 z^3 - z^2 + z + 1}{(z - 1)^3 (z + 1) (z^2 + 1)}.
    \label{eq:gfVolz}
\end{equation}
All the poles of $g$ lie on the unit circle, so that the entropy
is zero.
Moreover, due to the presence of the factor $\left( z-1 \right)^{3}$
following remark \ref{rem:asydeg} we have that equation
\eqref{Volz} has quadratic growth.

\subsubsection{Equation \eqref{Vol_mod1s}:}
Equation \eqref{Vol_mod1s} is symmetric and has the same growth of degrees
as equation \eqref{Vol1s}.
Therefore we have that equation \eqref{Vol_mod1s} has zero entropy
and quadratic growth.

\subsubsection{Equation \eqref{mVol3}:}
Equation \eqref{mVol3} depends on the parameter $a$.
Using a simple scaling if $a\neq0$ it is possible to set $a=1$. 
For this reason we can consider the two cases $a=1$ and $a=0$.
Equation \eqref{mVol3} is symmetric for both $a=1$ and $a=0$.
However, in both cases equation \eqref{mVol3} has the same growth of 
degrees as equation \eqref{mVol2}.
Therefore we have that equation \eqref{mVol3} has zero entropy
and quadratic growth for all values of $a$.
%
%
%
%
%

\subsubsection{Equation \eqref{Vol_mod2}:}
Equation \eqref{Vol_mod2} is symmetric and has the following growth of degrees:
\begin{equation}
    1, 3, 4, 7, 10, 15, 19, 25, 31, 39, 46, 55, 64\dots.
    \label{eq:growthVol_mod2s}
\end{equation}
The generating function corresponding to the growth 
\eqref{eq:growthVol_mod2s} is:
\begin{equation}
    g(z) = -\frac{z^5 - z^4 + 2 z^3 - z^2 + z + 1}{(z - 1)^3 (z + 1) (z^2 + 1)}.
    \label{eq:gfVol_mod2s}
\end{equation}
All the poles of $g$ lie on the unit circle, so that the entropy
is zero.
Moreover, due to the presence of the factor $\left( z-1 \right)^{3}$
following remark \ref{rem:asydeg} we have that equation
\eqref{Vol_mod2} has quadratic growth.

\subsection{List 4}

\subsubsection{Equation \eqref{our1}}
Equation \eqref{our1} is symmetric and has the following growth of degrees:
\begin{equation}
    1, 5, 10, 16, 26, 38, 51, 65, 82, 102, 123, 145, 170, 198, 227, 257\dots.
    \label{eq:growthour1}
\end{equation}
The generating function corresponding to the growth 
\eqref{eq:growthour1} is:
\begin{equation}
    g(z) = -\frac{z^8 - 2 z^7 + 2 z^6 - 2 z^5 + z^4 + 2 z^3 - z^2 + 2 z + 1}{(z - 1)^3 (z^2 + 1)}.
    \label{eq:gfour1}
\end{equation}
All the poles of $g$ lie on the unit circle, so that the entropy
is zero.
Moreover, due to the presence of the factor $\left( z-1 \right)^{3}$
following remark \ref{rem:asydeg} we have that equation
\eqref{our1} has quadratic growth.

\subsubsection{Equation \eqref{our2}}
Equation \eqref{our2} is symmetric and has the following growth of degrees:
\begin{equation}
    \begin{aligned}
        &1, 9, 19, 37, 55, 75, 101, 129, 163, 199,237,
        \\
        &\quad  281,327, 379, 433, 489, 551, 615, 685, 757\dots.
    \end{aligned}
    \label{eq:growthour2}
\end{equation}
The generating function corresponding to the growth 
\eqref{eq:growthour2} is:
\begin{equation}
    g(z) = -\frac{2 z^9 - 2 z^8 - z^6 + z^5 + 8 z^3 + 2 z^2 + 7 z + 1}{%
        (z - 1)^3 (z^4 + z^3 + z^2 + z + 1)}.
    \label{eq:gfour2}
\end{equation}
All the poles of $g$ lie on the unit circle, so that the entropy
is zero.
Moreover, due to the presence of the factor $\left( z-1 \right)^{3}$
following remark \ref{rem:asydeg} we have that equation
\eqref{our2} has quadratic growth.

\subsubsection{Equation \eqref{ourii}}
Equation \eqref{ourii} is symmetric and has the following growth of degrees:
\begin{equation}
    1, 5, 7, 13, 19, 29, 37, 49, 61, 77, 91, 109, 127\dots.
    \label{eq:growthourii}
\end{equation}
The generating function corresponding to the growth
\eqref{eq:growthourii} is:
\begin{equation}
    g(z) = -\frac{z^5 - z^4 + 4 z^3 - 2 z^2 + 3 z + 1}{(z - 1)^3 (z + 1) (z^2 + 1)}.
    \label{eq:gfourii}
\end{equation}
All the poles of $g$ lie on the unit circle, so that the entropy
is zero.
Moreover, due to the presence of the factor $\left( z-1 \right)^{3}$
following remark \ref{rem:asydeg} we have that equation
\eqref{ourii} has quadratic growth.

\subsection{List 5}
\subsubsection{Equation \eqref{INB}}
Equation \eqref{INB} is symmetric and has the following growth of degrees:
\begin{equation}
    1, 2, 3, 4, 6, 8, 10, 13, 16, 19, 23, 27, 31\dots.
    \label{eq:growthINB}
\end{equation}
The generating function corresponding to the growth 
\eqref{eq:growthINB} is:
\begin{equation}
    g(z) = -\frac{z^4 - z^3 + 1}{(z - 1)^3 (z^2 + z + 1)}.
    \label{eq:gfINB}
\end{equation}
All the poles of $g$ lie on the unit circle, so that the entropy
is zero.
Moreover, due to the presence of the factor $\left( z-1 \right)^{3}$
following remark \ref{rem:asydeg} we have that equation
\eqref{INB} has quadratic growth.

\subsubsection{Equation \eqref{mod_INB}}
Equation \eqref{mod_INB} depends on the parameter $a$.
Using a simple scaling if $a\neq0$ it is possible to set $a=1$. 
For this reason we can consider the two cases $a=1$ and $a=0$.
If $a=1$ equation \eqref{mod_INB} is asymmetric, but it has
the following growth of degrees in both directions:
\begin{equation}
    \begin{aligned}
        &1, 2, 2, 4, 5, 7, 8, 11, 12, 16, 18, 22, 24, 30, 31, 
        \\
        &\quad38, 41, 47, 50, 59, 60, 70, 74, 82, 86, 98, 99\dots.
    \end{aligned}
    \label{eq:growthmod_INB}
\end{equation}
The generating function corresponding to the growth 
\eqref{eq:growthmod_INB} is:
\begin{equation}
    g(z) = -\frac{z^{13} + z^{10} + z^9 - z^7 + 2 z^5 + z^4 + z^3 + z^2 + 2 z + 1}{%
    (z - 1)^3 (z + 1)^2 (z^2 - z + 1) (z^2 + z + 1)^2}.
    \label{eq:gfmod_INB}
\end{equation}
If $a=0$ equation \eqref{mod_INB} is symmetric, but its growth of
degrees is still given by the sequence \eqref{eq:growthmod_INB}
and fitted by the generating function \eqref{eq:gfmod_INB}.
Therefore in both cases the entropy is zero since all the poles of 
$g$ lie on the unit circle.
Moreover, due to the presence of the factor $\left( z-1 \right)^{3}$
following remark \ref{rem:asydeg} we have that equation
\eqref{mod_INB} has quadratic growth for all values of $a$.
%
%

\subsubsection{Equation \eqref{mikh1}}
Equation \eqref{mikh1} depends on the parameter $a$.
Using a simple scaling if $a\neq0$ it is possible to set $a=1$. 
For this reason we can consider the two cases $a=1$ and $a=0$.
Equation \eqref{mikh1} is symmetric for both $a=1$ and $a=0$.
Moreover, in both cases it has the following growth of degrees:
\begin{equation}
    1, 4, 6, 10, 16, 22, 29, 37, 46, 56, 67, 79, 92\dots.
    \label{eq:growthmikh1}
\end{equation}
The generating function corresponding to the growth 
\eqref{eq:growthmikh1} is:
\begin{equation}
    g(z) = -\frac{(z^2 - z + 1) (z^4 - z^3 - 2 z^2 + 2 z + 1)}{(z - 1)^3}.
    \label{eq:gfmikh1}
\end{equation}
If $a=0$ equation \eqref{mikh1} is symmetric, but its growth of
degrees is still given by the sequence \eqref{eq:growthmikh1}
and fitted by the generating function \eqref{eq:gfmikh1}.
Therefore in both cases the entropy is zero since all the poles of 
$g$ lie on the unit circle.
Moreover, due to the presence of the factor $\left( z-1 \right)^{3}$
following remark \ref{rem:asydeg} we have that equation
\eqref{mikh1} has quadratic growth for all values of $a$.

%
%
%

\subsubsection{Equation \eqref{eq1}}
Equation \eqref{eq1} is symmetric and has the following growth of degrees:
\begin{equation}
    1, 4, 7, 10, 15, 21, 27, 36, 45, 54, 65, 77, 89, 104, 119, 134, 151, 169, 187, 208\dots.
    \label{eq:growtheq1}
\end{equation}
The generating function corresponding to the growth 
\eqref{eq:growtheq1} is:
\begin{equation}
    g(z) = -\frac{(z^4 + z + 1) (z^3 - z^2 + z + 1)}{(z - 1)^3 (z + 1) (z^2 - z + 1) (z^2 + z + 1)}.
    \label{eq:gfeq1}
\end{equation}
All the poles of $g$ lie on the unit circle, so that the entropy
is zero.
Moreover, due to the presence of the factor $\left( z-1 \right)^{3}$
following remark \ref{rem:asydeg} we have that equation
\eqref{eq1} has quadratic growth.

\subsubsection{Equation \eqref{eq311}}
Equation \eqref{eq311} is symmetric and has the following growth of degrees:
\begin{equation}
    \begin{aligned}
        &1, 3, 1, 4, 5, 5, 6, 13, 7, 15, 17, 17, 19, 31, 
        \\
        &\quad 21,34, 37, 37, 40, 57, 43, 61, 65, 65, 69\dots.
    \end{aligned}
    \label{eq:growtheq311}
\end{equation}
The generating function corresponding to the growth 
\eqref{eq:growtheq311} is:
\begin{equation}
    g(z) = -\frac{z^{10} + z^9 + z^7 - z^6 + z^5 + z^4 + 3 z + 1}{%
        (z - 1)^3 (z + 1)^2 (z^2 - z + 1) (z^2 + z + 1)^2}.
    \label{eq:gfeq311}
\end{equation}
All the poles of $g$ lie on the unit circle, so that the entropy
is zero.
Using the $\mathcal{Z}$-transform we obtain the following
expression for the degrees:
\begin{equation}
    \begin{aligned}
        d_{n} &= 
        \frac{n^{2}}{9}+\frac{5n}{9}+{\frac{191}{108}}
        +{\frac {5\left( -1 \right) ^{n}}{12}}+\frac{\left( -1\right) ^{n}n}{6}
        \\
        &+\frac{\sqrt {3}}{36}\sin \left( \frac{n\pi}{3} \right) 
        -\sqrt {3}\left(\frac{5n}{54}+\frac{7}{36}  \right)\sin \left( \frac{2n\pi}{3} \right)
        \\
        &+\frac{1}{12} \cos\left( \frac{n\pi}{3} \right) 
        + \left( {\frac {5 n}{18}}+{\frac{79}{108}}\right) \cos \left( \frac{2n\pi}{3} \right).
    \end{aligned}
    \label{eq:dneq311}
\end{equation}
Therefore the growth \eqref{eq:dneq311} is quadratic as
$n\to\infty$, but we notice also the unusual presence of oscillating
term proportional to $\left( -1 \right)^{n}n$ which explains the high 
oscillations of the sequence \eqref{eq:growtheq311}.
A similar was found in \cite{GSL_general} on the degree pattern of
some linearisable quad-equations.


\subsubsection{Equation \eqref{INB1}}
Equation \eqref{INB1} is symmetric and has the following growth of degrees:
\begin{equation}
    1, 3, 4, 7, 11, 15, 20, 25, 31, 38, 45, 53, 62, 71, 81, 92, 103, 115, 128, 141, 155\dots.
    \label{eq:growthINB1}
\end{equation}
The generating function corresponding to the growth 
\eqref{eq:growthINB1} is:
\begin{equation}
    g(z) = -\frac{z^8 - z^7 - z^6 + z^5 + z^3 - z^2 + z + 1}{(z - 1)^3 (z^2 + z + 1)}.
    \label{eq:gfINB1}
\end{equation}
All the poles of $g$ lie on the unit circle, so that the entropy
is zero.
Moreover, due to the presence of the factor $\left( z-1 \right)^{3}$
following remark \ref{rem:asydeg} we have that equation
\eqref{INB1} has quadratic growth.
%

\subsubsection{Equation \eqref{INB3}}
Equation \eqref{INB3} depends on the parameter $a$.
Using a simple scaling if $a\neq0$ it is possible to set $a=1$. 
For this reason we can consider the two cases $a=1$ and $a=0$.
Equation \eqref{INB3} is not symmetric for both $a=1$ and $a=0$.
However, in both cases it has the following growth of degrees:
\begin{equation}
    1, 3, 6, 9, 13, 19, 24, 31, 40, 48, 57, 69, 79, 91, 106, 119, 133, 151, 166, 183, 204\dots.
    \label{eq:growthINB3}
\end{equation}
The generating function corresponding to the growth 
\eqref{eq:growthINB3} is:
\begin{equation}
    g(z) = -\frac{z^9 + z^7 + z^6 + 3 z^5 + 2 z^4 + 2 z^3 + 3 z^2 + 2 z + 1}{%
        (z - 1)^3 (z + 1) (z^2 - z + 1) (z^2 + z + 1)^2)}.
    \label{eq:gfINB3}
\end{equation}
Therefore in both cases the entropy is zero since all the poles of 
$g$ lie on the unit circle.
Moreover, due to the presence of the factor $\left( z-1 \right)^{3}$
following remark \ref{rem:asydeg} we have that equation
\eqref{INB3} has quadratic growth for all values of $a$.
%

\subsubsection{Equation  \eqref{MX2}}
Equation \eqref{MX2} depends on the parameter $a$.
Using a simple scaling if $a\neq0$ it is possible to set $a=1$. 
For this reason we can consider the two cases $a=1$ and $a=0$.
Equation \eqref{MX2} is symmetric for both $a=1$ and $a=0$.
When $a=1$ equation \eqref{MX2} has the following growth of degrees:
\begin{equation}
    \begin{aligned}
        &1, 7, 15, 24, 35, 51, 66, 85, 109, 132, 157, 189, 218,
        \\
        &\quad 251, 291,328, 367, 415, 458, 505, 561, 612, 665\dots.
    \end{aligned}
    \label{eq:growthMX2a1}
\end{equation}
The generating function corresponding to the growth 
\eqref{eq:growthMX2a1} is:
\begin{equation}
    g_{a=1}(z) = -\frac{z^{10} + 2 z^7 + 5 z^6 + 8 z^5 + 5 z^4 + 8 z^3 + 8 z^2 + 6 z + 1}{
        (z - 1)^3 (z + 1) (z^2 - z + 1) (z^2 + z + 1)^2}.
    \label{eq:gfMX2a1}
\end{equation}
When $a=0$ equation \eqref{MX2} has the following growth of degrees:
\begin{equation}
    \begin{aligned}
        &1, 7, 15, 23, 33, 48, 63, 84, 107, 130, 155, 182, 211, 
        \\
        &\quad 248,287,324, 363, 404, 447, 500, 555, 606, 659, 
        \\
        &\quad  714,771, 840, 911, 976, 1043, 1112, 1183, 1268, 1355\dots.
    \end{aligned}
    \label{eq:growthMX2a0}
\end{equation}
The generating function corresponding to the growth 
\eqref{eq:growthMX2a1} is:
\begin{equation}
    g_{a=0}(z) = -\frac{\left(
        \begin{gathered}
            2 z^{16} - 3 z^{15} + 2 z^{14} - z^{13} + 2 z^{12} - 2 z^{11} - z^{10}+ 6 z^9
            \\
             + z^8 + 6 z^7 + 3 z^6 + 10 z^5 + 5 z^4 + 5 z^3 + 3 z^2 + 5 z + 1
     \end{gathered}\right)
        }{%
        (z - 1)^3 (z + 1) (z^2 - z + 1)^2 (z^2 + z + 1)^2}.
    \label{eq:gfMX2a0}
\end{equation}

All the poles of $g_{a=1}$ and $g_{a=0}$ lie on the unit circle, 
so that the entropy is zero in both cases.
Moreover, due to the presence of the factor $\left( z-1 \right)^{3}$
following remark \ref{rem:asydeg} we have that equation
\eqref{MX2} has quadratic growth in both cases.

\subsubsection{Equation \eqref{INB2}}
Equation \eqref{INB2} is symmetric and has the same growth of degrees
as equation \eqref{INB3}.
Therefore we have that equation \eqref{INB2} has zero entropy
and quadratic growth.
%

\subsection{List 6}
\subsubsection{Equation \eqref{seva}}
Equation \eqref{seva} is symmetric and has the following growth of degrees:
\begin{equation}
    1, 4, 6, 11, 16, 22, 29, 37, 46, 56, 67, 79, 92\dots.
    \label{eq:growthseva}
\end{equation}
The generating function corresponding to the growth 
\eqref{eq:growthseva} is:
\begin{equation}
    g(z) = -\frac{z^5 - 3 z^4 + 4 z^3 - 3 z^2 + z + 1}{(z - 1)^3}.
    \label{eq:gfseva}
\end{equation}
All the poles of $g$ lie on the unit circle, so that the entropy
is zero.
Moreover, due to the presence of the factor $\left( z-1 \right)^{3}$
following remark \ref{rem:asydeg} we have that equation
\eqref{seva} has quadratic growth.

\subsubsection{Equation \eqref{rat}}
Equation \eqref{rat} is symmetric and has the following growth of degrees:
\begin{equation}
    1, 6, 13, 25, 42, 61, 85, 111, 139, 171, 207, 245, 287, 333, 381, 433, 489, 547\dots.
    \label{eq:growthrat}
\end{equation}
The generating function corresponding to the growth 
\eqref{eq:growthrat} is:
\begin{equation}
    g(z) = -\frac{2 z^{10} - z^9 - 3 z^7 + z^4 + 4 z^3 + 2 z^2 + 4 z + 1}{(z - 1)^3 (z^2 + z + 1)}.
    \label{eq:gfrat}
\end{equation}
All the poles of $g$ lie on the unit circle, so that the entropy
is zero.
Moreover, due to the presence of the factor $\left( z-1 \right)^{3}$
following remark \ref{rem:asydeg} we have that equation
\eqref{rat} has quadratic growth.

\subsubsection{Equation \eqref{sroot}}
We proved that equation \eqref{sroot} can be brought in
in rational form \eqref{eq:hyp}, but this form is not
bi-rational.
So we cannot apply the algebraic entropy method to
this equation.

\subsubsection{Equation \eqref{SK2}}
Equation \eqref{SK2} is symmetric and has the following growth of degrees:
\begin{equation}
    1, 7, 15, 24, 35, 49, 67, 86, 107, 132, 159, 188, 219, 254, 291, 330, 371, 416\dots.
    \label{eq:growthSK2}
\end{equation}
The generating function corresponding to the growth 
\eqref{eq:growthSK2} is:
\begin{equation}
    g(z) = -\frac{z^{11} - 2 z^{10} + z^9 + 2 z^6 - 2 z^5 + z^4 + z^3 + 2 z^2 + 5 z + 1}{(z - 1)^3 (z + 1) (z^2 + 1)}.
    \label{eq:gfSK2}
\end{equation}
All the poles of $g$ lie on the unit circle, so that the entropy
is zero.
Moreover, due to the presence of the factor $\left( z-1 \right)^{3}$
following remark \ref{rem:asydeg} we have that equation
\eqref{SK2} has quadratic growth.

\section{Discussion}
\label{sec:disc}

In the previous section we computed the algebraic entropy of all the 
integrable Volterra-like five-point differential-difference equations recently
classified in \cite{GarifullinYamilovLevi2016,GarifullinYamilovLevi2018}.
When possible, we showed that the method of algebraic entropy and the method
of generalised symmetries agree.
That is, we showed that all the equations integrable according to the
generalised symmetry test are also integrable according to the algebraic
entropy method, i.e. the algebraic entropy is zero.
The algebraic entropy method is unfortunately unable to treat the
semi-discrete Kaup-Kaupershmidt equation \eqref{sroot}.
This is because the generalised symmetry approach, differently from
the algebraic entropy, makes no assumption on the nature of the recurrence
and algebraic or even transcendental terms are allowed.
That is, we proved that the for integrable Volterra-like five-point 
differential-difference equations the following version of the
algebraic entropy conjecture holds true:
\begin{conjecture}
    The condition that algebraic entropy is zero is equivalent to
    the definition of integrability for bi-rational maps.
\end{conjecture}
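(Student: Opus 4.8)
The plan is to split the claimed equivalence into its two implications and treat them quite differently, since one is essentially reduced to a finite verification while the other is a genuine classification problem.

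\emph{Integrability $\Rightarrow$ vanishing entropy.} For the Volterra-like class at hand this direction is, up to a single exception, already established by the computations of Section~\ref{sec:results}: the equations of Lists~1--6 are precisely the output of the generalised-symmetry classification of \cite{GarifullinYamilovLevi2016,GarifullinYamilovLevi2018}, and for every bi-rational representative we have exhibited a rational generating function all of whose poles lie on the unit circle, so that $\eta_\varphi=0$ by \eqref{eq:algentgenfunc}. The only gap is the discrete Kaup--Kupershmidt equation \eqref{sroot}: Proposition~\ref{prop:ratkk} puts it in the rational-looking form \eqref{srootrat}, but solving \eqref{srootrat} for $u_2$ still requires extracting a square root, so the associated map is not bi-rational and the degree-growth machinery of Section~\ref{sec:theory} does not literally apply. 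To close this case I would either (i) adapt the degree bookkeeping of Section~\ref{sec:theory} to algebraic (rather than rational) maps and show that the two-valued map defined by \eqref{srootrat} has polynomially bounded degree on each branch, or (ii) use the fact that its continuum limit \eqref{eq:kk}, being the Kaup--Kupershmidt hierarchy, already fixes the expected answer; I expect (i) to be the cleaner route.

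\emph{Vanishing entropy $\Rightarrow$ integrability.} This is the substantive content, and here I would attempt a classification argument: impose on a general autonomous equation of the form \eqref{eq:4thgen} (equivalently \eqref{eq:ddelespec}) that the degree sequence $d_N=\deg\varphi^N$ grow sub-exponentially, and show that the admissible $A$, $B$, $C$ are, modulo the autonomous point and non-invertible non-point transformations used in \cite{GarifullinYamilovLevi2016,GarifullinYamilovLevi2018,GGY_autom}, exactly those of Lists~1--6 together with \eqref{sroot}. The natural first step is to analyse the singularity structure of $\varphi$: after homogenisation, $\varphi^2$ factors out a polynomial precisely along the exceptional locus of $\varphi$, and vanishing entropy forces these cancellations to recur in a periodic pattern --- the differential-difference analogue of singularity confinement --- which translates into strong functional equations for $A$, $B$ and $C$. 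A complementary tool is the block structure visible throughout Section~\ref{sec:results}: every equation in the lists reduces by a non-invertible non-point transformation to one of a handful of ``master'' equations (\eqref{Vol}, \eqref{lin_eq}, \eqref{INB}, \eqref{seva}, \eqref{eq311}, \dots). If one can establish a monotonicity lemma --- that an autonomous non-invertible non-point transformation cannot raise the algebraic entropy in the direction in which it is a genuine map --- then vanishing entropy of the masters propagates to the whole class, and the classification reduces to enumerating the admissible such transformations, which is a finite task.

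\emph{Main obstacle.} The hard part is the converse implication in the absence of any finiteness theorem for algebraic entropy: there is no known algorithm that certifies sub-exponential growth forever from finitely many $d_N$, so the classification step rests on the (very robust, but unproven) predictivity of the Pad\'e-generated generating function discussed after \eqref{eq:genfunc}. Moreover, the literal statement that $\eta_\varphi=0$ is equivalent to integrability for \emph{every} bi-rational map is delicate --- vanishing entropy does not by itself separate linearisable from genuinely integrable behaviour, and one cannot exclude a priori a zero-entropy map with no Lax pair --- so the realistic theorem is the restricted one: within the class \eqref{eq:4thgen}, $\eta_\varphi=0$ holds exactly for the equations carrying the infinite hierarchy of generalised symmetries of \cite{GarifullinYamilovLevi2016,GarifullinYamilovLevi2018}. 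Even this restricted equivalence needs the singularity analysis above to be pushed to a complete case division, which I expect to be the principal technical bottleneck.
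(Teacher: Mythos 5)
There is a genuine mismatch here, and it is worth being precise about what the paper actually establishes. The statement is labelled a \emph{conjecture}, and the paper does not prove it: what Section~\ref{sec:results} provides is a one-directional, finite verification that every bi-rational equation in Lists~1--6 (the output of the generalised-symmetry classification) has vanishing entropy, with the non-bi-rational Kaup--Kupershmidt equation \eqref{sroot} explicitly set aside as outside the scope of the method. Your first implication therefore matches the paper's content, but you should flag that even this direction is not rigorous in the paper: the degree sequences are computed only up to some finite $N$, and the conclusion $\eta_\varphi=0$ rests on the assumed predictivity of the Pad\'e-generated rational generating function --- the paper itself calls this ``very strong and very unlikely'' to fail, which is an honest admission that it is a heuristic, not a proof. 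Your proposed fix for \eqref{sroot} (extending the degree bookkeeping to algebraic two-valued maps) goes beyond anything in the paper, which simply declares the entropy method inapplicable there; appealing to the continuum limit \eqref{eq:kk} would not work in any case, since continuum limits do not preserve or detect degree growth.

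The second implication is where the proposal genuinely departs from, and overreaches, the paper. The paper makes no attempt to show that vanishing entropy forces membership in Lists~1--6; that is precisely why the statement remains a conjecture. Your classification programme rests on at least three unproven ingredients: a differential-difference analogue of singularity confinement strong enough to pin down $A$, $B$, $C$ in \eqref{eq:4thgen}; a monotonicity lemma for entropy under autonomous non-invertible non-point transformations (plausible, but nowhere established, and delicate because such transformations are genuinely non-invertible, so entropy invariance in the sense of \eqref{eq:invariance} does not apply); and a finiteness theorem certifying sub-exponential growth from finitely many degrees, which does not exist. You correctly identify these as the bottleneck, but identifying a bottleneck is not the same as closing it --- as written, the converse direction is a research plan, not a proof, and no amount of case analysis sketched at this level of generality substitutes for the missing lemmas. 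The accurate conclusion is the one the paper itself draws: the computations support the conjecture within this class, in one direction, modulo the Pad\'e heuristic, and nothing more.
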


Except for the two known non-trivially linearisable equations
\eqref{Bur2} and \eqref{Bur} the growth is always quadratic. 
Equation \eqref{eq311} possesses an interesting non-standard
highly oscillating growth, observed for the first time in 
differential-difference equations.
Nevertheless the asymptotic growth is still quadratic.

In the case of two-dimensional difference equation it is known
that the only possible polynomial, i.e. integrable, growth is
quadratic \cite{Diller1996}.
Integrable higher order maps can exhibit higher rate of growth,
see e.g. \cite{JoshiViallet2017,GJTV_class,GJTV_sanya}.
Despite being infinite-dimensional all the integrable Volterra-like 
five-point differential-difference equations possess this ``minimal''
integrable growth.

In the case of difference equations it has been observed that degree
growth greater than quadratic is related to a procedure called 
\emph{deflation} \cite{JoshiViallet2017}.
That is, a five-point equation is reduced to a four-point one
using a non-point potential-like transformations of the form:
\begin{equation}
    v_{n} = \frac{a_{1}u_{n}u_{n+1}+a_{2}u_{n+1}+a_{3}u_{n}+a_{4} }{%
        b_{1}u_{n}u_{n+1}+b_{2}u_{n+1}+b_{3}u_{n}+b_{4}}.
    \label{eq:defl}
\end{equation}
Let us notice, that the inflated version of a Volterra-like
differential-difference equation is not always a Volterra-like
differential-difference equation.
This fact makes more difficult to make predictions on the integrability
properties of the inflated forms of differential-difference equations.

Finally, we notice that another interesting problem is to study the
integrability properties of the stationary reductions of the integrable
Volterra-like five-point differential-difference equations.
The stationary reduction of a bi-rational five-point differential-difference
equation is a fourth-order difference equation, i.e. a four-dimensional
map of the projective space into itself.
It will be important to understand how integrability arises
inside these families of equations and if it fits with known
cases of integrable families of fourth-order differential difference
equations \cite{GJTV_class,GJTV_sanya,CapelSahadevan2001}.
%

\section*{Acknowledgment} 

GG thanks Prof. R. N. Garifullin, Prof. D. Levi and Prof. R. I. Yamilov for
interesting and helpful discussions during the preparation of this paper.

GG is supported by the Australian Research Council through Nalini Joshi's
Australian Laureate Fellowship grant FL120100094.

\bibliographystyle{plain}
\bibliography{bibliography}

\end{document}